\documentclass[11pt]{article}

\usepackage{graphicx} 
\usepackage[T1]{fontenc}
\usepackage[utf8x]{inputenc}
\usepackage{amsmath, amssymb, amsthm}
\usepackage{csquotes}
\usepackage[english]{babel}
\usepackage{eurosym}
\usepackage{fullpage}
\usepackage{palatino}
\usepackage{tikz}
\usepackage{verbatim}
\usepackage{fancybox}

\usepackage{subfig}
\usepackage{enumerate}
\usepackage{epsfig}
\usepackage[active]{srcltx}
\usepackage{amsfonts}
\usepackage{amsmath}
\usepackage[mathlines]{lineno}
\usepackage{xspace}
\usepackage{tikz}
\usepackage{authblk}

\usetikzlibrary{arrows,shapes,snakes,automata,backgrounds,petri,patterns}

\newtheorem{lemma}{Lemma}

\newtheorem{theorem}{Theorem}
\newtheorem{claim}{Claim}

\newtheorem{question}{Question}

% ENVIRONNEMENT "claim" de Frederic
\newcounter{claimb}
    
\def\claimb{$$\vcenter\bgroup\advance\hsize by -8em\noindent
\refstepcounter{claimb}\ignorespaces\it}        
\makeatletter
\def\endclaimb{\rm\egroup\leqno(\theclaimb)$$\global\@ignoretrue}
\makeatother

    {\noindent \emph{Proof.} {}{#1}{}}{\hfill
    $\Diamond$\vspace{1em}}

\begin{document}

\title{Reconfiguring Independent Sets in Cographs\thanks{Work partially supported by the ANR Grant EGOS (2012-2015) 12 JS02 002 01}}

% \author{
% Marthe Bonamy
% \\
% {\tt marthe.bonamy@lirmm.fr} \\
% \small{LIRMM, Universit\'e Montpellier 2, France.}
% \and
% Nicolas Bousquet \\
% {\tt nicolas.bousquet@lirmm.fr} \\
% \small{LIRMM, Montpellier, France and McGill University, Montr\'eal, Canada and GERAD, Universit\'e de Montr\'eal.}
% }

\author[1]{Marthe Bonamy \thanks{marthe.bonamy@lirmm.fr}}
\author[1,2]{Nicolas Bousquet\thanks{nicolas.bousquet@lirmm.fr}}
\affil[1]{LIRMM, Universit\'e Montpellier 2, France}
\affil[2]{Department of Mathematics and Statistics, McGill University, and GERAD, Montr\'eal}

\renewcommand\Authands{ and }

\date{}
\maketitle

\begin{abstract}
Two independent sets of a graph are \emph{adjacent} if they differ on exactly one vertex (\emph{i.e.} we can transform one into the other by adding or deleting a vertex). 
Let $k$ be an integer. We consider the reconfiguration graph $TAR_k(G)$ on the set of independent sets of size at least $k$ in a graph $G$, with the above notion of adjacency. Here we provide a cubic-time algorithm to decide whether $TAR_k(G)$ is connected when $G$ is a cograph, thus solving an open question of~[Bonsma 2014]. As a by-product, we also describe a linear-time algorithm which decides whether two elements of $TAR_k(G)$ are in the same connected component.\vspace{5pt} \\
\textit{Keywords:} reconfiguration, independent sets, cographs, linear time algorithm.
\end{abstract}

\section{Introduction}
Reconfiguration problems (see~\cite{BonamyB14,Gopalan09,Ito2011,Ito2009} for instance) consist in finding step-by-step transformations between two feasible solutions such that all intermediate results are also feasible. Such problems model dynamic situations where a given solution is in place and has to be modified, but no property disruption can be afforded. Two types of questions are interesting concerning reconfiguration problems: in which case can we ensure that there exist such a transformation? And what is the complexity of finding such a reconfiguration? In this paper we concentrate on algorithmic aspects of reconfiguration of independent sets. Reconfiguration of independent sets naturally appears when we deal with the reconfiguration of geometric objects (see~\cite{HearnD05}) and received a lot of attention in the last few years (e.g.~\cite{b14,BonsmaKW14,HearnD05,KaminskiMM12}).

In the whole paper, $G=(V,E)$ is a graph where $n$ denotes the size of $V$, and $k$ is an integer. For standard definitions and notations on graphs, we refer the reader to~\cite{Diestel2005}.
A \emph{$k$-independent set} of $G$ is a set $S \subseteq V$ with $|S|\geq k$, such that no two elements of $S$ are adjacent. Two $k$-independent sets $I, J$ are \emph{adjacent} if they differ on exactly one vertex (\emph{i.e.} if there exists a vertex $u$ such that $I = J \cup \{u\}$ or the other way round). This model is called the \emph{Token Addition and Removal} (TAR). Two other models using other notions of adjacencies have been introduced. In the token jumping (TJ) model~\cite{Ito2011}, two independent sets are adjacent if one can be obtained from the other by replacing a vertex by another one (in particular it means that we only look at independent sets of a given size). In the token sliding (TS) model~\cite{HearnD05}, tokens can be moved along edges of the graph, \emph{i.e} vertices can only be replaced by vertices which are adjacent to them (see~\cite{b14} for an general overview of the results for all these models). 

In this paper we only deal with the TAR model. The \emph{$k$-TAR-reconfiguration graph of $G$}, denoted $TAR_k(G)$, is the graph whose vertices are $k$-independent sets of $G$, with the adjacency defined above. Given a graph $G$, deciding if two $k$-independent sets are in the same connected component in $TAR_k(G)$ is $\mathbf{PSPACE}$-complete even in the case of perfect graphs or of planar graphs with maximum degree three~\cite{HearnD05,KaminskiMM12}. Moreover, deciding if there exists a reconfiguration between two $k$-independent sets with at most $\ell$ operations is strongly NP-complete~\cite{KaminskiMM12}.

Nevertheless, on some more restrictive classes of graphs, there exist positive results, for instance for claw-free graphs~\cite{BonsmaKW14} or line graphs~\cite{Ito2011}. In the class of cographs, Kaminski et al.~\cite{KaminskiMM12} proved that there exists an efficient algorithm for the reachability problem in the TS model. Bonsma recently proved in~\cite{b14} that, in the TAR model, this decision problem is polynomial (even quadratic) for the class of cographs, \emph{i.e.} the class of graphs with no induced path on four vertices. The following question was then asked.

\begin{question}[Bonsma~\cite{b14}]
What is the complexity of deciding whether $TAR_k(G)$ is connected, for $G$ a cograph?
\end{question}

We solve it here by arguing that there is a cubic time algorithm to decide whether $TAR_k(G)$ is connected. The algorithm and its complexity will be detailed in Section~\ref{sec:connected}. In Section~\ref{sec:linear}, we show that a similar argument can also be used to decide in linear-time whether two independent sets belong to the same connected component of $TAR_k(G)$, thus improving the result of Bonsma~\cite{b14}. Our proof is constructive in the sense that our linear time algorithm provides a sequence of operations which transform the first independent set into the second one.

\section{Deciding the connectivity of $TAR_k(G)$}\label{sec:connected}

A graph $G=(V,E)$ is a \emph{cograph}~\cite{lerchs71} if it does not contain induced paths of length $4$. Equivalently, a graph is a cograph if:
\begin{itemize}
 \item $G$ is a single vertex.
 \item Or $V$ can be partitioned into $V_1,V_2$ such that $G[V_1]$ and $G[V_2]$ are cographs and there is no edge between any vertex of $V_1$ and any vertex of $V_2$.
 \item Or $V$ can be partitioned into $V_1,V_2$ such that $G[V_1]$ and $G[V_2]$ are cographs and every vertex of $V_1$ is adjacent to every vertex of $V_2$ (such an operation is called a \emph{join}).
\end{itemize}

Consequently, each cograph admits a so-called cotree decomposition, which is a rooted binary tree whose leaves are all vertices in the graph, and whose nodes are either "disjoint union" or "join" operations (see Figure~\ref{fig:tree}). Note that a cotree has $(n-1)$ internal nodes since it has $n$ leaves which are the vertices of $G$. Given a node $B$ of the cotree, \emph{the cograph induced by $B$} is the subgraph induced in $G$ by the vertices that correspond to a leaf that $B$ cuts from the root. By abuse of notation, $\alpha(B)$ refers to the size of the largest independent set in the cograph induced by $B$.

\begin{theorem}\label{thm:cographs}
Given a cograph $G$ and an integer $k$, it can be decided in cubic time (in $n$) whether $TAR_k(G)$ is connected or not.
\end{theorem}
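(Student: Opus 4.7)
The plan is to exploit the recursive cotree structure of cographs: first I compute the cotree $T$ of $G$ in linear time, then I run a bottom-up dynamic program that records, for each node $B$ of $T$ and each threshold $k' \in \{0, \ldots, n\}$, whether $TAR_{k'}(G[B])$ is connected, along with the auxiliary data $\alpha(B)$ and the set $\mathcal{M}(B) \subseteq \{0, \ldots, n\}$ of sizes of maximal independent sets of $G[B]$.

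The join case is the easy one. If $B = V_1 * V_2$, the join property forces every independent set of $G[B]$ to lie entirely in one $V_i$, so $\alpha(B) = \max(\alpha(V_1), \alpha(V_2))$, $\mathcal{M}(B) = \mathcal{M}(V_1) \cup \mathcal{M}(V_2)$, and for $k \geq 1$ the graph $TAR_k(G[B])$ is the vertex-disjoint union of $TAR_k(G[V_1])$ and $TAR_k(G[V_2])$. Hence it is connected iff $k = 0$ (the empty set connects everything), or at most one side admits a $k$-independent set and that side's $TAR_k$ is itself connected.

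The disjoint-union case is where the real work lies. If $B = V_1 \cup V_2$, every $k$-independent set of $G[B]$ splits uniquely as $I_1 \cup I_2$ with $I_i$ independent in $V_i$, so $\alpha(B) = \alpha(V_1) + \alpha(V_2)$ and $\mathcal{M}(B) = \mathcal{M}(V_1) + \mathcal{M}(V_2)$ (Minkowski sum). The key claim, whose necessity direction is straightforward, is that $TAR_k(G[B])$ is connected iff (i) both $TAR_{\max(0, k - \alpha(V_2))}(G[V_1])$ and $TAR_{\max(0, k - \alpha(V_1))}(G[V_2])$ are connected, and (ii) either $k \notin \mathcal{M}(V_1) + \mathcal{M}(V_2)$ or $TAR_k(G[B])$ has at most one vertex. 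Condition (i) is necessary by fixing a maximum independent set on one side and projecting any reconfiguration path onto the other. Condition (ii) is necessary because any pair of maximal independent sets $M_1 \subseteq V_1$, $M_2 \subseteq V_2$ with $|M_1| + |M_2| = k$ yields an isolated vertex of $TAR_k(G[B])$: both sides being locally maximal forbids additions, and any removal drops the size below $k$.

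The main obstacle is proving that (i) and (ii) are jointly sufficient. I would argue that, starting from any $(I_1, I_2) \in TAR_k(G[B])$, one can reach a canonical state $(M_1, M_2)$ with $M_i$ a fixed maximum independent set of $V_i$: greedily augment each side to a maximal independent set of its own graph; by (ii) the total size then exceeds $k$, providing the slack to remove a vertex on one side while reconfiguring the other via the connectivity granted by (i); iterating enlarges each side to its own maximum, from which (i) transports us to the prescribed $(M_1, M_2)$. For complexity, each of the $O(n)$ cotree nodes is processed in $O(n^2)$ time---the dominant cost being the Minkowski sum of the children's $\mathcal{M}$ sets across $O(n)$ candidate values of $k$---giving the announced $O(n^3)$ overall bound.
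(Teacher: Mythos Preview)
Your dynamic-programming framework is natural and your join case is correct, but the characterisation you give for the disjoint-union case is wrong: conditions (i) and (ii) are necessary but \emph{not} sufficient, so the sufficiency argument you sketch cannot be completed.

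Here is an explicit counterexample. Take $V_1=V_2=K_{2,3}\cup K_1$, say $V_1$ has bipartition $\{a,b\}$ versus $\{c,d,e\}$ together with an isolated vertex $x$, and $V_2$ is an isomorphic copy on $\{A,B,C,D,E,X\}$. Then $\alpha(V_i)=4$ and $\mathcal M(V_i)=\{3,4\}$. Set $k=5$. Condition (i) asks that $TAR_{k-\alpha(V_2)}(G[V_1])=TAR_{1}(G[V_1])$ be connected, which it is (everything reaches $\{x\}$), and symmetrically for $V_2$. Condition (ii) holds since $\mathcal M(V_1)+\mathcal M(V_2)=\{6,7,8\}\not\ni 5$. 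Yet $TAR_5(G[V_1\cup V_2])$ is disconnected: the state $\bigl(\{a,b,x\},\{A,B,X\}\bigr)$ has size $6$, both sides are maximal, and every single removal leads to a size-$5$ state from which no addition is possible (any vertex of $\{c,d,e\}$ is adjacent to one of $a,b$, and similarly on the other side); so its component has seven vertices and never reaches $\bigl(\{c,d,e,x\},\{C,D,E,X\}\bigr)$. The paper's pruning algorithm, applied to this instance, detects exactly this obstruction at the join node inside $V_1$: removing $\{a,b\}$ and its neighbourhood leaves a graph with a maximal independent set of size $4\in[k-\alpha(B),\,k-1]=[3,4]$.

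The flaw in your sufficiency sketch is the sentence ``providing the slack to remove a vertex on one side while reconfiguring the other via the connectivity granted by (i)''. Condition (i) gives connectivity of $TAR_{k-\alpha(V_2)}(G[V_1])$, but once $M_2$ is fixed you need connectivity of $TAR_{k-|M_2|}(G[V_1])$, which is a \emph{higher} threshold whenever $|M_2|<\alpha(V_2)$; there is no reason this should follow. In the example above $k-|M_2|=2$ and $\{a,b,x\}$ is stuck in $TAR_2(G[V_1])$, even though $TAR_1(G[V_1])$ is connected. A correct recursion for the disjoint-union node must take into account, for each $m\in\mathcal M(V_2)$, the connectivity behaviour of the $V_1$ side at threshold $k-m$ (and conversely), which is essentially what the paper's pruning criterion encodes; the single pair of thresholds in your condition (i) is not enough.
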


\begin{proof}
Let $G$ be a cograph and $k$ be an integer. 

\begin{figure}
\centering
 \includegraphics[scale=0.6]{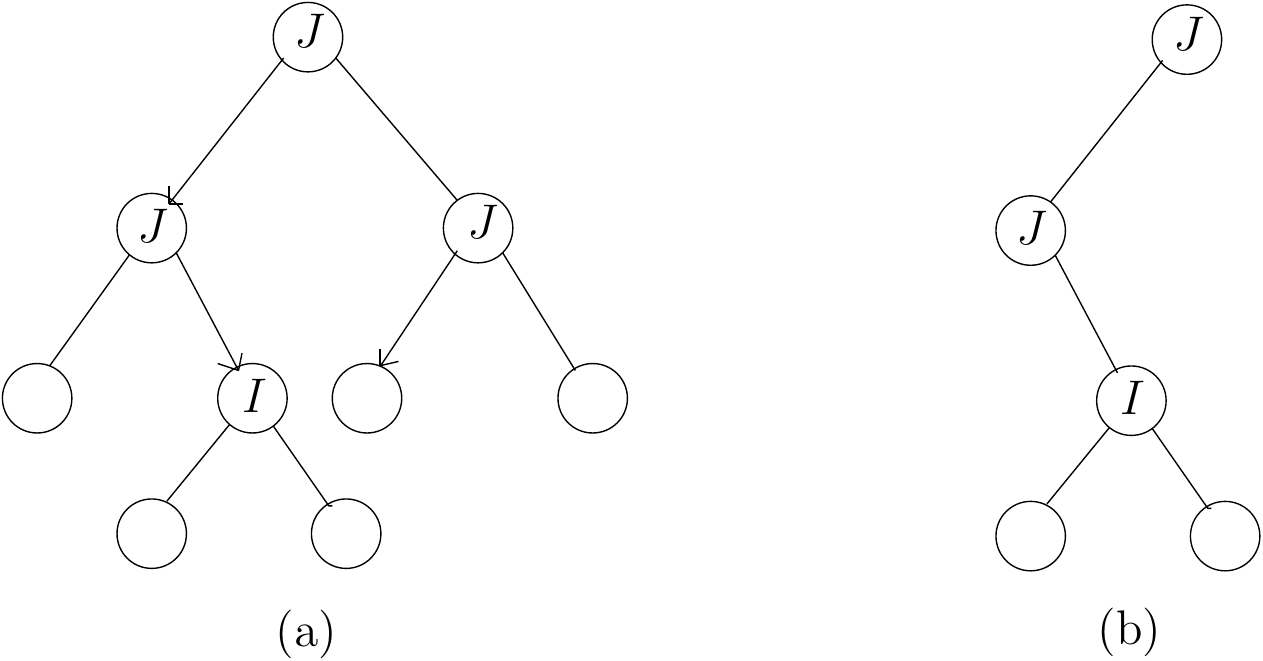}
 \caption{(a) A cotree decomposition where join nodes have been oriented. The leaves correspond to the vertices of $G$. Arcs represent good sides.
          (b) The stable-search corresponding to the maximal independent set where we always make good choices.}
\label{fig:tree}
\end{figure}

We first compute a cotree decomposition of $G$. This can be done in linear time~\cite{Corneil85}. Now, starting bottom-up from the leaves, we compute for every node $B$ the size of the largest independent set in the cograph induced by $B$, as follows. 
\begin{itemize}
 \item If $B$ is a leaf, the answer is $1$ (the unique vertex is an independent set). 
 \item If $B$ is a join node, the answer is the maximum of the size of the largest independent sets in its two sons. 
 \item If $B$ is a disjoint union node, the answer is the sum of the size of the largest independent sets in its two sons. 
\end{itemize}
Let us show that it can be done in linear time.
\begin{claim}\label{claim:nodeconstant}
 An array containing the values of the maximum independent sets contained in the cograph induced by $B$ (for every $B$) can be computed in linear time.
\end{claim}
\begin{proof}
We say that a node is \emph{treated} if we have already determined its value. The only fact that we have to prove is that we can find a node which is not yet treated but whose sons are treated in constant time. We initialize a list containing all the leaves of the cotree. For every internal node, we create a variable which will contain the number of ``treated'' sons. We initialize this variable to $0$. We also create an array which will contain the size of the maximum independent set in the cograph induced by $B$. \\
As long as the list is not empty, we ``treat'' the head of the list and then delete it from the list. If it is a leaf, we give it value one in the array; if it is a join node, we give it the maximum value of its sons; if it is a disjoint union node, we give it the sum of the values of its sons. In any case we increase by one the number of treated sons of the father of $B$. If the number of treated sons of the father equals two, we put the father on the list.
\end{proof}

We consider the set of independent sets that are maximal for inclusion. We say that an independent set is \emph{maximal} if it is maximal by inclusion.
The above algorithm can be transformed into a quadratic time algorithm computing all the possible sizes of maximal independent sets of $G$. Even if the algorithm is simple, its complexity analysis is a little bit more involved.

\begin{claim}\label{claim:precomp}
We can compute in quadratic time the lists of all possible sizes of maximal independent sets of all nodes $B$ of the cotree of $H$.
\end{claim}
\begin{proof}
For each node $B$, we create a list containing all possible sizes of maximal independent sets of the cograph induced by $B$. Now, starting bottom-up from the leaves, we compute for every node $B$ this list, as follows:
\begin{itemize}
 \item If $B$ is a leaf, the list only contains the integer one. 
 \item If $B$ is a join node, the list is the disjoint union of the lists of its sons. 
 \item If $B$ is a disjoint union node, the list contains all the possible sums of one element on the list of a son and one on the other. 
\end{itemize}
As in Claim~\ref{claim:nodeconstant}, a node to treat can be found in constant time. Moreover, a list $L$ which contains values between $1$ and $n$ can be sorted in time $\mathcal{O}(\max(n,|L|))$. This is a classical algorithmic result: we create a binary array of size $n$ initialized to $0$. Then we read the values of the list and put the entry corresponding to each value to $1$. Finally we just extract a new list from this array by putting a value in the list if its corresponding entry has value one in the array.

The computation of the list on a leaf takes a constant time. For a join node, it takes a linear time (since each list contains at most $n$ values). Things are more technical for disjoint union node. Indeed, since we have to compute all the possible sums of values, it can take up to a quadratic time. And since there could be up to $n$ join nodes, the global complexity can be cubic. Nevertheless, we can do better than this naive analysis. Consider a node $B$. Assume that the cograph induced by $B$ has $k$ vertices, with $k \geq 2$. Let $B_1$ and $B_2$ be the cographs induced by each of the two sons of $B$, and let $k_1,k_2$ be their respective number of vertices in the cographs. Note that we have $k_1+k_2=k$. The complexity $C(B)$ for finding the value on this node is:
\[C(B) = \left\{\begin{array}{ll}
 C(B_1)+C(B_2) + \max(k_1,k_2) & \mbox{if $B$ is a disjoint union node}\\
 C(B_1)+C(B_2) + k_1\cdot k_2 & \mbox{if $B$ is a join node}
 \end{array}\right.\]
 
By abuse of notation, we set $C(k)$ to be the maximum of $C(D)$ for a cograph $D$ on $k$ vertices. Then we obtain that the function $C$ satisfies $C(1)=1$ and, for $k \geq 2$, the following equation:

\[C(k) \leq max\{C(k_1)+C(k_2) + k_1\cdot k_2| k_1,k_2 \in \mathbb{N}^*, \ k=k_1+k_2\}.\]
 
Note indeed that $k_1 \cdot k_2 \geq max(k_1,k_2)$.
In particular, the function $x\mapsto x^2$ satisfies these equations. So we can show by recurrence that the complexity of the algorithm is at most quadratic in~$n$.
\end{proof}

For every join node $B$ with sons $B_1$ and $B_2$, we define a \emph{good} side and a \emph{bad} side. If $\alpha(B_1) \neq \alpha(B_2)$, the side with the largest independent set is the good side. If $\alpha(B_1) = \alpha(B_2)$, we choose arbitrarily a good side (see Figure~\ref{fig:tree}(a)).

A \emph{stable-search} is a subtree $T$ of the cotree of $G$ containing the root and such that for every disjoint union node $B$ of $T$ both sons of $B$ are in $T$, and for every join node $J$ of $T$ exactly one son of $J$ is in $T$. Intuitively, it corresponds to ``making no choice'' when we arrive on a disjoint union node and ``choosing exactly one son'' when we arrive on a join node. By extension, \emph{making a bad choice} at a join node is going for the bad son, and similarly for ``good''. In Figure~\ref{fig:tree}(b), the stable-search represents the stable-search where we always make good choices.

\begin{claim}
 There is a bijection between the maximal independent sets and the stable-searches.
\end{claim}
\begin{proof}
Let $I$ be an independent set. The subtree $T$ of $I$ is the subtree of the cotree of the graph defined inductively as follows:
\begin{itemize}
 \item The leaves corresponding to the vertices of $I$ are in $T$.
 \item If a node is in $T$, its father is in $T$.
\end{itemize}
Obviously, the root is in the subtree of $I$. A maximal independent set appears in precisely one side of a join node, so the subtree of $S$ only contains vertices on one side of a join node. And if a maximal independent set appears on one side of a disjoint union node, it should also appear on the other side for the sake of maximality. Therefore, every maximal independent set corresponds to a stable-search, and every stable-search builds a unique maximal independent set, hence the claim. 
\end{proof}
In the following, our goal is to prove that we can transform any maximal independent set of size at least $k$ into the maximal independent set which always follows the good side (which happens to be a maximum independent set), or provide an example of an independent set of size at least $k$ which cannot be transformed into it. To do so, we repeatedly prune our cotree to get rid of one join node at a time while ensuring that we thus obtain an equivalent instance for the problem. By choosing carefully the join node we consider for pruning, we can argue that if it cannot be pruned without possibly obtaining a non-equivalent instance, then the reconfiguration graph is not connected.

We start with $G_0=G$ and build at each step a new cograph $G_{i+1}$ that is obtained from $G_i$ by deleting the subgraph induced by a son of a join node. After such a deletion, a node has only one son, so we contract it with its father.

At each step $i$, until there is is no join node anymore (which means that all the remaining nodes in the cotree are disjoint union nodes, \emph{i.e.} the remaining graph is a stable set), we consider the bottom-most join nodes (by a bottom-most join node, we mean a join node such that the subgraph induced by any of its sons is a stable set, \emph{i.e.} only contains disjoint union nodes), and take one which minimized $\alpha(B)$, where $B$ is the cograph induced by its bad side. Note that such a node can easily be found in linear time. 

Let $\mathcal{L}$ be the list of possible sizes of maximal stable sets in the cograph $G \setminus (B \cup N(B))$, as obtained from Claim~\ref{claim:precomp}. Let us show that deciding if there is a solution or not can be deduced from the values in $\mathcal{L}$.
\begin{lemma}
 \begin{itemize}
  \item If $\mathcal{L}$ contains a value between $k-1$ and $k-\alpha(B)$, then the answer is NO.
  \item If no value of$\mathcal{L}$ is between $k-1$ and $k-\alpha(B)$, then for $G_{i+1}$ the graph $G_i$ where the subgraph induced by $B$ has been deleted, $TAR_k(G_i)$ is connected if and only if $TAR_k(G_{i+1})$ is. 
 \end{itemize}
\end{lemma}
\begin{proof}
First assume that there is a possible value $\beta$ between $k-1$ and $k-\alpha(B)$. Let $I$ be a maximal independent set of $G \setminus (B \cup N(B))$ of size $\beta$ and $I_B$ be an independent set of $B$ of size exactly $\alpha(B)$. Note that $|I \cup I_B| \geq k$. By minimality of $B$, the number of vertices of $I \cup I_B$ on each ``bad side'' is at least $\alpha(B)$. Moreover, we cannot add a vertex in some good side if there remains a vertex of the independent set in its corresponding bad side (remember good and bad sides are joined). 
So in $I \cup I_B$, if we want to add a vertex in a good side, we have to delete at least $\alpha(B)$ vertices, which is impossible since we want to keep an independent set of size at least $k$ (and it is impossible to ``gain'' vertices at other places since the solution is maximal everywhere except on bad sides).

Let us formalize this intuition. Let $J$ be a maximal stable set. An \emph{excellent node for $J$} is a node $B$ such that $J\cap B \neq \varnothing$ and $J$ only makes good choices on the sub-cotree rooted in $B$. For instance, if the root is an excellent node for $J$, then the maximal stable set $J$ is maximum. Let us partition the vertex set into two parts. A \emph{reachable vertex from $J$} is it is in the cotree induced by an excellent node of $J$. A \emph{target vertex of $J$} is a vertex which is not reachable.

%For every vertex $v$, \emph{the nearest bad join node $B_v$} of $v$ is the ancestor of (the leaf) $v$ which is the nearest from $v$ such that $v$ appears in the bad size of $B_v$. Let us partition the vertex set into two parts. Any vertex of $I$ is a \emph{reachable vertex}. Any vertex $v$ such that $v$ appears in the bad side of $B_v$ and such that $I$ chooses the good side for every node of the cograph induced by $B_v$ is a \emph{reachable vertex}. All the other vertices are \emph{target vertices}.
\begin{claim}\label{claim:target}
 No independent set of the connected component of $TAR_k(G_i)$ containing $I$ contains a target vertex of $I$.
\end{claim}
\begin{proof}
Assume by contradiction that there exists a sequence of additions and deletions of vertices starting from $I$ which leads to an independent set $J$ containing a target vertex of $I$, and take $J$ to be the first independent set in that sequence containg a target vertex of $I$. Denote by $x$ the target vertex and by $J'$ the set $J \setminus x$. The set $J'$ only contains reachable vertices. Note that $J'$ is just before $J$ in the sequence.

%Let $y$ be a vertex of $J'$. If $y \notin I$, then there exists a node $B_y$ such that $y$ is in the bad side of $B_y$ and vertices of $I$ are in the good side of $B_y$. 
Let us first show that we can assume that $J'$ only contains vertices of $I$. Let $y$ be a vertex of $J'$. Since $y$ is reachable, it means that there exists an ancestor $N_y$ of $y$ such that $B_y$ is an excellent node. W.l.o.g. $N_y$ is the first such ancestor. Assume that $y \notin I$. It means, since $N_y$ is an excellent node, that $N_y$ is a join node and that $y$ is in the bad side of $N_y$. Denote by $B$ the bad side of $N_y$ and $A$ the good side of $N_y$. By definition of excellent node, since $I$ has made all the good choices in the subtree of $A$, the maximum size of an independent set of $A$ is equal to $I \cap A$. So we have $\alpha(B) \leq \alpha(A) = |I \cap A|$. Since all the vertices of $N_y$ have the same neighborhood outside $N_y$, if $J'$ contains a vertex in $B$, then we can replace $B \cap J'$ by $I \cap N_y$ and not decrease the size of the independent set ($x$ can still be added to this independent set since vertices of $B_y$ have the same neighborhood). Note that this new independent set is in the same connected component of $TAR_k(G_i)$ as $I$: when we consider the sequence from $I$ to $J'$, we follow precisely the same sequence except that we do not delete the vertices of $I \cap A$ when they were deleted in the sequence and do not add vertices of $B$ when they are added. So we can assume that $J'$ is a subset of $I$. 

%Moreover, since $I$ has made all the good choices in the subtree of $B_y$ (by definition of reachable vertex), it means that $J$ contains no vertex in $I \cap B_y$, but vertices of the bad side of $B_y$ instead. By definition of bad side, it means that $|J \cap B_y |\leq | I \cap B_y |$. Since all the vertices of $B_y$ have the same neighborhood outside $B_y$, if $J'$ contains a vertex in $B_y$, then we can replace $B_y \cap J'$ by $I \cap B_y$ and not decrease the size of the independent set ($x$ can still be added to this independent set since vertices of $B_y$ have the same neighborhood). So we can assume that $J'$ is a subset of $I$.

So we can assume that the set $J'$ is a subset of $I$. Nevertheless, since $x$ can be added to $J'$, it means that there exists a join node $B$ such that $x$ is in the good side and vertices of $I$ are in the bad side. By assumption, we know that $|I \cap B| \geq \alpha(B)$. So $J'$ has size at most $|I| - \alpha(B) < k$, a contradiction since $J'$ is in $TAR_k(G_i)$.
\end{proof}

Let us now assume that we are not in the first case, \emph{i.e.} there is no value of $\mathcal{L}$ in the interval between $k-1$ and $k-\alpha(B)$. The proof will consist in showing the following: first, if an independent set $I$ contains a vertex in $B$, then there exists in its connected component in $TAR_k(G_i)$ an independent set which does not contain any vertex in $B$. It means that $TAR_k(G_{i})$ is connected if $TAR_k(G_{i+1})$ is since we just have to find such independent sets in the components and find a reconfiguration sequence between them in $TAR(G_{i+1})$. Then we will show that if there is a sequence between two independent sets which do not contain a vertex of $B$ then there is sequence between them with no element containing a vertex of $B$. In other words, $TAR_k(G_{i+1})$ is connected if $TAR_k(G_i)$ is since a sequence between two stables sets of $TAR_k(G_{i+1})$ in $TAR_k(G_i)$ can be transformed into a reconfiguration between them in $TAR_k(G_{i+1})$.

Let $J$ be a maximal independent set of size at least $k$ containing a vertex of $B$. By assumption on $\mathcal{L}$, the size of $J \setminus B \geq k$ (it cannot be smaller than $k-\alpha(B)$ since otherwise the whole independent set $J$ would be smaller than $k$ since $J \cap B \leq  \alpha(B)$). So the size of $J$ is at least $k +\alpha(B)$. So if we delete the vertices of $B \cap J$, we still have an independent set of size at least $k$, which proves the first statement.

Now assume that there exists a sequence of additions and deletions between two independent sets $J$ and $J'$ which do not contain a vertex of $B$. Let $A$ be the good side of the join node whose bad side is $B$. Assume that at some step we add a vertex of $B$. Consider the first such addition.  Note that since we add this vertex, it means that no vertex of $A$ is in the independent set (vertices of the good and the bad sides are joined). Instead of adding this vertex, we can add a vertex of a maximum independent set in $A$. These two vertices have the same neighborhood outside $A \cup B$, so the constructed set is still a stable set. At each step where we add a vertex of $B$, we can add instead a vertex of a maximum stable set of $A$ (which is at least as large, by definition of good side) and each time we delete a vertex of the bad side, we can reflect this procedure on the good side. So we can build a sequence that does not rest upon vertices of $B$, simply by projecting all operations on vertices of $B$ onto the vertices of a maximum stable set in $A$.
\end{proof}

During the transformation of $G_i$ into $G_{i+1}$, we delete a join node, and there are at most $n-1$ of them, so our algorithm is indeed polynomial, and more explicitly runs in cubic time. This corresponds to $n$ times the execution time of Claim~\ref{claim:precomp}. 
\end{proof}

Note that, if we had $k$ at most the size of any maximal independent set, then our algorithm would run in quadratic time as there is no need to compute any list of values. Indeed, it suffices to check if the size of a minimum maximal stable set $I$ in $G \setminus (B \cup N(B))$ is at least $k + \alpha(B)$, and if so delete the corresponding join node. If not, then the union of $I$ and $B$ is also an independent set, which differs from the maximum independent set obtained by only taking good choices, and cannot be reconfigured into any other maximal independent set by definition of $B$. Since $|I \cup B|\geq k$ we have a counter-example to the connectivity of $TAR_k(G)$.

\section{Distance between independent sets}\label{sec:linear}

Let us now prove that it is possible to determine if two vertices are in the same connected component. The proof relies on some claims proved for Theorem~\ref{thm:cographs}.

\begin{theorem}
Given a cograph $G$, an integer $k$, and two independent sets $I, J$ of size at least $k$, it can be decided in linear time whether $I$ and $J$ are in the same connected component of $TAR_k(G)$.
\end{theorem}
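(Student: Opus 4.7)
The plan is to compute, for each of $I$ and $J$, a canonical representative of its connected component in $TAR_k(G)$, and then check that the two representatives coincide. The approach mirrors the pruning procedure of Theorem~\ref{thm:cographs}, but because we only care about the components of two specific independent sets, we avoid computing the lists $\mathcal{L}$ of all possible sizes of maximal independent sets, and everything can be decided from running sums along the cotree.

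First, I would compute a cotree of $G$ together with the values $\alpha(B)$ for every node $B$; by Claim~\ref{claim:nodeconstant}, this takes linear time, and I orient each join node into a good side and a bad side as in the proof of Theorem~\ref{thm:cographs}. Then I extend $I$ (and $J$) to a maximal independent set of its own connected component by walking the cotree once and greedily adding vertices; since adding a vertex to an independent set is an admissible TAR operation from any $k$-independent set, this clearly preserves the component and takes linear time. Call the maximal extensions $\hat{I}$ and $\hat{J}$; by the bijection between maximal independent sets and stable-searches, each of them is encoded by a stable-search in the cotree, which I can store at every node in constant additional space.

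Next I reduce $\hat{I}$ to a canonical form $I^{\ast}$ (and similarly $\hat{J}$ to $J^{\ast}$) by traversing the cotree bottom-up. At each join node $B$ with sons $A_{\text{good}}$ and $A_{\text{bad}}$, if the current set uses the bad side, I decide in constant time whether these vertices can be migrated to the good side without dropping the size below $k$. Using an argument analogous to the second part of Lemma~1, this migration is safe exactly when the total mass of $\hat I$ outside $V_B$ is at least $k - \alpha(A_{\text{good}})$, a quantity I maintain incrementally as I walk the tree. If migration is possible, I replace the vertices in $V_B$ by the canonical maximum independent set of $A_{\text{good}}$ (the one obtained by always choosing the good side); otherwise, the choice is \emph{forced to bad}, and I leave $\hat{I} \cap V_B$ unchanged. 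Since every step is constant time and I process each cotree node at most once, the whole reduction runs in linear time, and by analogy with Claim~\ref{claim:target} the sets of forced bad-side choices are an invariant of the connected component.

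Finally, I compare $I^{\ast}$ and $J^{\ast}$ by traversing the cotree once more: $I$ and $J$ lie in the same connected component of $TAR_k(G)$ if and only if $I^{\ast}=J^{\ast}$. The reconfiguration sequence itself is obtained by concatenating the greedy additions $I \to \hat{I}$, the canonical migrations $\hat{I} \to I^{\ast}$, the reverse migrations $I^{\ast}=J^{\ast} \to \hat{J}$, and the reverse additions $\hat{J} \to J$, each of which is linear in the number of operations performed and recorded as the algorithm runs. The main obstacle I expect is justifying that the canonical form is well-defined and independent of the order in which bottom-most join nodes are processed, and that equality of canonical forms really characterizes being in the same connected component; both should follow from careful reuse of the arguments in Claim~\ref{claim:target} and Lemma~1, but exploited for one independent set at a time rather than for all of $TAR_k(G)$ simultaneously.
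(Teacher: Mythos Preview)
Your overall strategy---extend $I$ and $J$ to maximal independent sets, push each one to a canonical ``highest'' representative of its component, then compare---is exactly what the paper does. But two concrete points in your reduction step are wrong, and they are not just details to be filled in.

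First, your migration criterion is too weak. At a join node with good side $A_{\text{good}}$ and bad side $A_{\text{bad}}$, every vertex of $A_{\text{good}}$ is adjacent to every vertex of $A_{\text{bad}}$; hence to switch sides you must first delete all of $\hat I\cap A_{\text{bad}}$ and only then add vertices of $A_{\text{good}}$. The intermediate independent set has size equal to the mass of $\hat I$ outside the join node, and this must be $\ge k$, not merely $\ge k-\alpha(A_{\text{good}})$. In fact your stated condition is vacuous: since $|\hat I|\ge k$ and $|\hat I\cap A_{\text{bad}}|\le\alpha(A_{\text{bad}})\le\alpha(A_{\text{good}})$, the mass outside is always $\ge k-\alpha(A_{\text{good}})$. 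With your test the algorithm migrates every bad choice and declares all $k$-independent sets to lie in one component. The paper's test is the correct one, $|\hat I\setminus A_{\text{bad}}|\ge k$.

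Second, even with the correct test, processing join nodes bottom-up in the cotree does not give the right canonical form. The paper instead sorts the bad join nodes of $\hat I$ by increasing $\alpha(A_{\text{bad}})$ and treats them in that order; this is what guarantees that when the process halts, the smallest remaining bad side really is unremovable, so that Claim~\ref{claim:target} applies and $I^\ast\ne J^\ast$ implies different components. With a bottom-up order you may first meet a deep bad node with large $\alpha(A_{\text{bad}})$, fail the test, mark it as forced, and later successfully migrate a shallower bad node with small $\alpha(A_{\text{bad}})$ whose gain would in fact have enabled the first migration; since you never revisit, your $I^\ast$ is not the highest reachable set. Concretely, take $k=10$, $|\hat I|=12$, a deep bad node with $\alpha(A_{\text{bad}})=5$, and an incomparable shallower bad node with $\alpha(A_{\text{bad}})=2$ and $\alpha(A_{\text{good}})=7$: bottom-up freezes the deep node, while the paper's order migrates both. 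So the obstacle you flag at the end is real, and it is resolved not by a cleverer analysis of the bottom-up order but by changing the order to increasing $\alpha(A_{\text{bad}})$ (still linear, via bucket sort on values in $\{1,\dots,\alpha(G)\}$).
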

\begin{proof}
Let us first describe the method. The precise algorithm which works in linear time will be detailed in Claims~\ref{claim:linear} and~\ref{claim:linear2}. We first compute a cotree decomposition of $G$. This can be done in linear time by~\cite{Corneil85}. We complete both $I$ and $J$ into two maximal independent sets (this completion can be done in linear time using the cotree). Each of them is associated with a stable-search as we have already observed in the proof of Theorem~\ref{thm:cographs}. We introduce a shortcut: when a maximal independent set $I$ induces the "good" maximum independent set in the subgraph induced by a node $N$ (i.e. it is obtained by making only good choices), we simply store this information as $I$ being good in $N$ (there is no ambiguity).

\begin{claim}\label{claim:linear}
The maximal independent set containing $I$ can be be computed in linear time. Moreover, this construction also gives the stable-search of the maximal stable set containing $I$.
\end{claim}
\begin{proof}
Let us first show, how, given the cotree $T$ of the cograph and an independent set, we can compute in linear time a maximal independent set which contains $I$. We first build the subtree of $T$ corresponding to $I$. We start bottom-up from the leaves. A leaf is selected if and only if it is in $I$. The father of a selected node is selected. Now we have the tree of $I$. Let us now complete it in order to obtain a maximal independent set. We just have to make a second search on this tree top-bottom from the root. If the node is an independent node, then both children are in the tree. If the node is a join node, then either one of the sons is already in the tree ($I$ intersects this subgraph) and we keep this node in the tree, or none of the sons are in the tree and then we put the good son in the tree. As in Claim~\ref{claim:nodeconstant}, a node which can be treated can be found in constant time, so this procedure takes a linear amount of time.

Now a maximal independent set containing $I$ is the independent set containing all the leaves of the constructed tree. Indeed, by construction, it contains the vertices of $I$. It is maximal since the tree is a stable-search: at any step, if we are on a join node, the tree contains vertices on exactly one side and if we are on a disjoint union node, it contains vertices of both sides. Moreover, the constructed tree is precisely the stable-search of this maximal independent set.
\end{proof}

For now we deal with each of $I$ and $J$ independently, say with $I$ here. Since $I$ is maximal by inclusion, the stable-search corresponding to $I$ is well-defined. Our algorithm just consists in a bottom-up operation. For every join node in the stable search, $I$ contains vertices either of the good or the bad side. We consider all such join nodes where $I$ contains vertices of the bad side, and sort them by increasing size of a maximum independent set in the bad size (this can be done in linear time as these are integer values between $1$ and $\alpha(G)$). If $I$ is a already the good maximum independent set and that list is empty, we stop here.

Let $N$ be the first join node in that order, \emph{i.e.} one that minimizes $\alpha(B)$, with $B$ being the bad side of $N$. Let $A$ be the good side of $N$. The independent set $I$ contains elements of $B$. If $|I\setminus B| \geq k$, we remove one by one all vertices in $I \cap B$, then add all vertices in the good maximum independent set in $A$, delete the node $N$ from the sorted bad vertices, and start again with the next join node in the order. Otherwise, we stop.

We claim that, when the algorithm stops, both $I$ and $J$ have been reconfigured into the ``highest'' maximal independent set possible. Then it suffices to check whether these two independent sets are the same. Indeed, if the two independent sets are the same then there is obviously a reconfiguration sequence from one into the other; so the output is yes. \\
Otherwise, let us prove that the answer is negative. By abuse of notations, let us still denote by $I$ and $J$ the obtained independent sets. Since $I \neq J$, w.l.o.g. there exists a join node $B$ of the cotree-decomposition, such that $I$ contains vertices of the good side of $B$ and $J$ contains vertices of the bad side of $B$ (since they are different they cannot have the same behavior on each join node of the cotree decomposition). According to the notations of the proof of Theorem~\ref{thm:cographs} and of Claim~\ref{claim:target}, since $J$ cannot be improved by the algorithm, it means that we have $|J| - \alpha(B') < k$ where $B'$ is the bad choice with the smallest independence number on $J$. Note that all the vertices of the good side of $B$ are target vertices for $J$. So Claim~\ref{claim:target} ensures that no independent set of the connected component of $J$ in $TAR_k(G)$ contains a vertex of the good side of $B$. But, by definition of $B$, the independent set $I$ intersects $B$. So the connected component of $J$ in $TAR_k(G)$ cannot contain $I$. \vspace{10pt}

Let us now show that this algorithm can be computed in linear time.

\begin{claim}\label{claim:linear2}
 The ``highest'' maximal independent set of $I$ can be computed in linear time.
\end{claim}
\begin{proof}
One of the key ideas in the algorithm consists in remembering only the size of the independent set which are in the cotree induced by a node instead of the independent set itself. Indeed, all the vertices which are below the node have the same neighborhood in the graph, so just computing its value is enough to know its behavior. Recall that Claim~\ref{claim:nodeconstant} ensures that we can compute an array containing the sizes of the maximum independent sets induced by each node in linear time.

First, create an array of size $n$. We start from the root of the tree and make a top-bottom search. For every join node where we make a bad choice, if the size of the maximal independent set of the bad size equals $k$, then put the node on the $k$-th entry of the array. Now the algorithm works as follows: initialize the size $s$ of the independent set as the size of $I$. For every entry $\ell$ in the increasing size of the array. If there is no element on the list, then move on to the next entry. Otherwise, let $B$ be an entry of the list. Either $s - \ell \geq k$, then delete $B$ of the list and increase $s$ by the difference between the size of the maximum independent set of the good side of $B$ minus the maximum independent set of the bad side. Indeed it corresponds to the case where we can delete all the vertices of the bad side and replace them by vertices of the good one. The size of the independent set increases by this difference.
Now if $s - \ell<k$, then it means that we cannot eliminate all the vertices of the bad side to replace them by vertices of the good side (and since we treat bad sides in the increasing order it means that no other bad side can be treated). 

Note that all these operations for a given node can be done in constant time. Each node appears in at most one list and the array is linear. So the execution time of this algorithm is linear.

Note that this algorithm does precisely what we described before. The unique difference is that it just keeps in mind the sizes of the independent sets and not the independent sets themselves since, as long as there is no bad choice in a subtree, we can easily find this independent set. If there are two bad choices where one of them is an ancestor of another, then the maximum independent set of the ancestor is at least as large as that of the small one. Since we treat bad nodes in the increasing order (of their maximum size), if it is strictly smaller, the small one is treated before. If they are equal, since we search top-bottom the tree, the bad choice which is closest from the leaves will be treated first. So at every step, we treat a node which is closest from the leaves and which has a smallest maximum independent set.
\end{proof}

Note, finally, that two independent sets can be compared in linear time, since we just have to verify if the stable-searches (which are words on at most $n-1$ letters) are the same or not.
\end{proof}

\bibliographystyle{plain}

\bibliography{biblio}

\end{document}